\DeclareFontFamily{OT1}{rsfs}{}
\DeclareFontShape{OT1}{rsfs}{m}{n}{<5> rsfs5 <7> rsfs7 <10> rsfs10}{}
\DeclareSymbolFont{mathrsfs}{OT1}{rsfs}{m}{n}
\DeclareSymbolFontAlphabet{\mathrsfs}{mathrsfs}
\newtheorem{theorem}{Theorem}
\newtheorem*{lemma}{Lemma}
\begin{document}

\title{The reproduction of the dynamics of a quantum system by an ensemble of classical particles beyond de Broglie--Bohmian mechanics} 

\author{Denys I. Bondar}
\email{dbondar@sciborg.uwaterloo.ca}
\affiliation{University of Waterloo, Waterloo, Ontario, Canada N2L 3G1}

\begin{abstract}
It is shown that for any given quantum system evolving unitarily with the Hamiltonian, $\hat{H} = \hat{\bf p}^2/(2m) + U({\bf q})$, [bold letters denote $D$-dimensional ($D \geqslant 3$) vectors] and with a sufficiently smooth potential $U({\bf q})$, there exits a classical ensemble with the Hamilton function,
$\mathrsfs{H} ({\bf p}, {\bf q}) = {\bf p}^2/(2m) + U^{(\infty)} ({\bf q})$,
where the potential $U^{(\infty)}({\bf q})$ coincides with $U({\bf q})$ for almost all ${\bf q}$ (i.e.,  $U^{(\infty)}$ can be different from $U$ only on a measure zero set), such that the square modulus of the wave function in the coordinate (momentum) representation approximately equals the coordinate (momentum) distribution of the classical ensemble within an arbitrary given accuracy. Furthermore, the trajectories of this classical ensemble, generally speaking, need not coincide with the trajectories obtained from de Broglie--Bohmian mechanics. Consequences of this result are discussed. 
\end{abstract}

\pacs{03.65.Ta, 03.65.Sq, 45.20.Jj, 02.30.Cj, 02.10.De, 02.30.Zz, 02.40.Pc}

\maketitle

\section{Introduction}

It is well known that results of quantum mechanics cannot be reproduced by a local and/or context-independent hidden variable theory \cite{Bell1966, Kochen1967}. Celebrated de Broglie--Bohmian mechanics \cite{deBroglie1926, Bohm1952a, Bohm1952b, Bohm1993}, being a non-local and contextual hidden variables theory, gives a consistent way of constructing an ensemble of classical trajectories which reproduces exactly quantum results. This is achieved by adding an extra term called the quantum potential, which depends on the wave function, into the Hamiltonian of the system. 

Is there any other systematic way (i.e., explicitly non de Broglie--Bohmian approach) of constructing an ensemble of classical trajectories that reproduces the quantum probability distribution? The answer to this question is affirmative owing to the recent method developed in Ref. \cite{Coffey2010} that allows one to construct classical trajectories from quantum mechanics using solely the time evolving probability density (not the full wave function) and without assuming or solving any equations of motion. Furthermore, it follows from the construction that these trajectories need not coincide with de Broglie--Bohmian trajectories. The method in Ref. \cite{Coffey2010} employes the geometrical construction of centroidal Voronoi tessellations.

In the current paper, we go further and present an alternative, though complementary, point of view on non de Broglie--Bohmian trajectories. Based on very mild topological assumptions on the configuration space, we show non-constructively (see Theorem \ref{theorem2}) that for any given quantum system evolving unitarily with the Hamiltonian 
\begin{eqnarray}\label{QHamiltonianIntro}
\hat{H} = \hat{\bf p}^2 /(2m) + U({\bf q}), 
\end{eqnarray}
(bold letters denote $D$-dimensional vectors throughout the paper) where the potential $U({\bf q})$ is a sufficiently smooth function, 
one can construct a classical ensemble with the Hamilton function
\begin{eqnarray}\label{ClassicalHailtonianIntro}
\mathrsfs{H} ({\bf p}, {\bf q}) = {\bf p}^2/(2m) + U^{(\infty)} ({\bf q}),
\end{eqnarray}
where the potential $U^{(\infty)}({\bf q})$ coincides with $U({\bf q})$ for almost all ${\bf q}$ (i.e.,  $U^{(\infty)}$ can be different from $U$ only on a measure zero set), such that the square modulus of the wave function in the coordinate (momentum) representation approximately equals the coordinate (momentum) distribution of the classical ensemble within an arbitrary given accuracy. Additionally, the quantum system with the Hamiltonian 
\begin{eqnarray}
\hat{H}^{(\infty)} = \hat{\bf p}^2/ (2m) + U^{(\infty)}({\bf q}),
\end{eqnarray}
is, physically speaking, the same as the original one with the Hamiltonian (\ref{QHamiltonianIntro}). 

Since our proof is non-constructive, the trajectories of the classical ensemble (\ref{ClassicalHailtonianIntro}) are only known to satisfy very basic geometrical (kinematical) assumptions; hence, these trajectories are not related to de Broglie--Bohmian ones in a general case.

In a nutshell, the proof of Theorem \ref{theorem2} is based on the fact that on the one hand, a quantum system does not ``feel'' any measure zero set modifications of the potential due to the dynamical non-locality of quantum mechanics; on the other hand, the corresponding classical system ``feels'' the difference due to the dynamical local nature of classical mechanics. Here, the {\it dynamical non-locality of quantum mechanics} means that during unitary non-relativistic evolution, the value of the wave function in some point at a subsequent time moment depends on the values of the wave function and the potential in all the points at the current time moment. Correspondingly, the {\it dynamical locality of classical mechanics} implies that according to the Hamilton equations, the position of a classical particle at a subsequent time moment depends only on the values of the potential in some small neighbourhood of the current position of the particle (i.e., the subsequent position depends on the force that acts on the particle in the current position). The measure zero set modifications of the potential $U$ are done along curves, which shall be called ``scratches,'' and the procedure of doing such modifications shall be named ``scratching'' the potential $U$. The main technique of the proof is to perform scratching and locate a classical particle such that the scratches become trajectories of the particle. Then, the shapes of the scratches are chosen such that the probability to find the classical particle in some region approximates the analogous probability for the quantum particle.

Strictly speaking, neither the Bell inequalities \cite{Bell1966} nor the Kochen--Specker theorem \cite{Kochen1967} applies to the classical ensemble constructed in Theorem \ref{theorem2} because Theorem \ref{theorem2} does not discuss the measurement process. A complete hidden variable theory can be obtained from Theorem \ref{theorem2} after embedding the measurement process. This should be done in future investigations; now however, we present intuitive arguments why such embedding should not convert Theorem \ref{theorem2} into a local or context-independent hidden variable theory.  As described in details in the proof of Theorem \ref{theorem2}, scratching is performed depending on the actual values of the square modulus of the wave functions at different time moments and at different positions. Hence, the constructed classical ensemble is to violate the Bell inequalities by the same token as de Broglie--Bohmian mechanics does. The contextuality of Theorem \ref{theorem2} is to arise from the fact that the classical ensemble reproduces the square modulus of the wave functions only; it does not provide any information upon other observables.

The rest of the paper is organized as follows: The next section contains exact formulation and rigorous proof of the informally reformulated result. Discussions of physical consequences of Theorem \ref{theorem2}, such as the issue of the discrimination between classical and quantum dynamics based on a posteriori information, a new interpretation of quantum mechanics, and quantization and semi classical approximation, are presented in Sec. \ref{Sec3}. The conclusions are drawn in the last section. Finally, a result upon the Diophantine approximation that is employed in the proof of Theorem \ref{theorem2} is obtained in the Appendix.

\section{Main results}\label{Sec2}

Perhaps, a simplest result that can be obtained within our framework is Theorem \ref{theorem1}. Despite the fact that Theorem \ref{theorem2} is the main result of the paper, we present Theorem \ref{theorem1} due to the following  methodological motivation: The proofs of both the theorems are mainly based on same ideas, however, Theorem \ref{theorem1} allows for more lucid  presentation of these ideas owing to a relative simplicity of its proof.

Let $\Lambda$ denote the Lebesgue measure in $\mathbb{R}^D$ throughout the paper.
\begin{theorem}\label{theorem1}
Assume that $D \geqslant 2$ and
\begin{enumerate}
\item $B \subseteq\mathbb{R}^D$, $0 < \Lambda(B) \leqslant \infty$, is an open convex set;
\item $B_1, \ldots, B_n$ is a closed cover of $B$ such that $B \subset \bigcup_{j=1}^n B_j$, $\Lambda(B) = \sum_{j=1}^n \Lambda(B_j)$, $\Lambda(B_j) \neq 0$, $j=1,\ldots,n$, $\Lambda(B_j \cap B_k) = 0$, $\forall k\neq j$;
\item A quantum system with the Hamiltonian 
\begin{eqnarray}
\hat{H}(t) =\hat{\bf p}^2 /(2m) + U({\bf q}, t),
\end{eqnarray}
is given, where $U({\bf q}, t)$ is a continuously differentiable function $\forall {\bf q}\in B$ and a continuous function $\forall t \in [t_i, t_f]$ ($-\infty < t_i < t_f < +\infty$) and 
\begin{eqnarray}\label{UIntegrable1}
\int_B \left|U({\bf q}, t)\right|  d^D {\bf q} < \infty, \qquad \forall t \in [t_i, t_f];
\end{eqnarray}
\item The wave function of the system, $\Psi ({\bf q}, t)$, evolves according to the Schr\"{o}dinger equation $\forall t \in [t_i, t_f]$, 
\begin{eqnarray}\label{ConditionOfConfinements}
\int_B \left| \Psi({\bf q}, t) \right|^2 d^D {\bf q} =1, \qquad \forall t \in [t_i, t_f].
\end{eqnarray}
Let
\begin{eqnarray}
P_k (t) = \int_{B_k} \left| \Psi({\bf q}, t) \right|^2 d^D {\bf q},
\end{eqnarray}
denote the probability to find a quantum particle in the region $B_k$ at time moment $t$;
\end{enumerate}
Then, for an arbitrary natural number $Q > n^{2n}$, there exists an ensemble of $N$ ($0 < N \leqslant Q$) classical particles with the Hamiltonian function
\begin{eqnarray}\label{ClassicalHamiltonian1}
\mathrsfs{H}^{(\lambda)} ({\bf p}, {\bf q}, t) = {\bf p}^2 /(2m) + U^{(\lambda)} ({\bf q}, t),
\end{eqnarray}
and a quantum system with the wave function $\Psi^{(\lambda)}({\bf q}, t)$ and the Hamiltonian 
\begin{eqnarray}\label{TimeDepSchratchHamiltonian}
\hat{H}^{(\lambda)}(t) = \hat{\bf p}^2 /(2m) + U^{(\lambda)}({\bf q}, t),
\end{eqnarray}
where $U^{(\lambda)}({\bf q}, t)$, $\forall \lambda>0$, is a continuously differentiable function $\forall {\bf q}\in B$ and a continuous function $\forall t \in [t_i, t_f]$, such that the equalities 
\begin{eqnarray}
\lim_{\lambda\to\infty} U^{(\lambda)}({\bf q}, t) &=&U({\bf q}, t), \label{LimPropertySchratchedPotential}\\
\lim_{\lambda\to\infty} \Psi^{(\lambda)} ({\bf q}, t) &=& \Psi ({\bf q}, t), \label{LimPropertySchratchedWaveFunc}
\end{eqnarray}
are valid $\forall t \in [t_i, t_f]$ and for almost all ${\bf q} \in B$ (they may be violated on a measure zero set). Furthermore, there exits the initial condition for the classical ensemble such that
\begin{eqnarray}
\max_{k=1,\ldots,n} \left\{ \left| P_k (t_{i,f}) - \pi_k^{(\lambda)} (t_{i,f}) \right| \right\} < \frac{1}{NQ^{\frac{1}{2n}}}, \, \forall \lambda > 0, \label{ApproximationBoundProbabilities}
\end{eqnarray}
where $\pi_k^{(\lambda)} (t)$ denotes the probability to find a classical particle in the region $B_k$ at time moment $t$, and the equality
\begin{eqnarray}
\sum_{k=1}^n \pi_k^{(\lambda)} (t) = 1, \qquad \forall \lambda > 0 \label{ClassicalConfimentCondition}
\end{eqnarray}
is valid for almost all  $t \in [t_i, t_f]$\footnote{
There may be time moments when the classical particle is on the boundary of one of the regions $B_k$.
},
i.e., the motion of the classical ensemble is confined in the region $B$.
\end{theorem}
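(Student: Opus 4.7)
The plan is to combine four ingredients: a simultaneous Diophantine approximation of the probabilities by rationals with a common denominator, a combinatorial placement of $N$ classical particles, a geometric ``scratching'' that forces each particle along a prescribed curve, and a soft–limit argument showing the quantum evolution is undisturbed as $\lambda\to\infty$. First I would invoke the Appendix lemma to approximate the two vectors $(P_1(t_i),\ldots,P_n(t_i))$ and $(P_1(t_f),\ldots,P_n(t_f))$ by rationals $(N_k^{(i)}/N)_k$ and $(N_k^{(f)}/N)_k$ with a single denominator $N\le Q$, nonnegative numerators summing to $N$, and coordinate-wise error strictly below $1/(NQ^{1/(2n)})$; the hypothesis $Q>n^{2n}$ is precisely the safety margin for the standard Dirichlet-type bound in the $2(n-1)$ genuinely free components. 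This rational approximation is the \emph{sole} source of the estimate (\ref{ApproximationBoundProbabilities}).

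Next, in the interior of each $B_k$ I would choose $N_k^{(i)}$ distinct points as initial positions at $t_i$ and, independently, $N_k^{(f)}$ distinct points as targets at $t_f$, and pair the $N$ initial points with the $N$ targets by any bijection, producing ordered pairs $({\bf q}_j^{(i)},{\bf q}_j^{(f)})$. For each pair I would pick a $C^2$ curve $\gamma_j:[t_i,t_f]\to B$ joining them, and, exploiting $D\ge 2$ together with the convexity of $B$, arrange the family $\{\gamma_j\}_{j=1}^N$ to be pairwise disjoint. Around each $\gamma_j$ I build a tubular neighbourhood of transverse width $O(1/\lambda)$ and modify $U$ inside it so that (a) the Hamilton equations for $\mathrsfs{H}^{(\lambda)}$ admit $\gamma_j(t)$ as an exact solution with initial data $({\bf q}_j^{(i)},m\dot\gamma_j(t_i))$; (b) $U^{(\lambda)}$ is $C^1$ in ${\bf q}$ and continuous in $t$; and (c) outside the union of tubes $U^{(\lambda)}\equiv U$. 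In tube-adapted coordinates the tangential component of the force is set equal to $m\ddot\gamma_j$, and a transverse profile vanishing to first order on the scratch is added so that each trajectory sits at the bottom of its transverse well. \emph{This is the step I expect to be the main obstacle}: smoothly gluing the individual tube modifications onto $U$ while keeping the tubes pairwise disjoint for \emph{every} $t\in[t_i,t_f]$ and shrinking their widths to zero with $\lambda$ requires careful geometric bookkeeping, especially where two scratches come close together.

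Finally, each particle launched from $({\bf q}_j^{(i)},m\dot\gamma_j(t_i))$ then follows $\gamma_j$ exactly, so $\pi_k^{(\lambda)}(t_{i,f})=N_k^{(i,f)}/N$ and (\ref{ApproximationBoundProbabilities}) follows directly from the Diophantine step; (\ref{ClassicalConfimentCondition}) follows from $\gamma_j\subset B$ together with the fact that the partition $\{B_k\}$ covers $B$ up to a null set, so the classical probabilities sum to $1$ at every time except the (at most finite) moments when some particle crosses a region boundary. Because the scratches form a one-dimensional, hence $\Lambda$-null, subset of $B$ and the tubes shrink onto them as $\lambda\to\infty$, equation (\ref{LimPropertySchratchedPotential}) is immediate; combined with (\ref{UIntegrable1}) and dominated convergence it yields $U^{(\lambda)}\to U$ in $L^1(B)$ for each $t$, which via the Dyson series and unitarity of the free propagator gives $\Psi^{(\lambda)}\to\Psi$ in $L^2(B)$ uniformly on $[t_i,t_f]$, whence (\ref{LimPropertySchratchedWaveFunc}) follows after passing to a subsequence.
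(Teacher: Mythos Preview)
Your overall architecture---Diophantine approximation of the $P_k(t_{i,f})$ via the Appendix lemma, placement of $N$ particles, scratching to force prescribed trajectories, then a limit argument for the quantum side---matches the paper's proof exactly. The difference lies in how the scratching is implemented, and the paper's choice dissolves precisely the obstacle you flag.

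The paper does \emph{not} use general $C^2$ curves with engineered tangential forces. Instead it takes the scratches to be straight segments joining ${\bf q}_k(t_i)$ to ${\bf q}_k(t_f)$ and defines the scratched potential multiplicatively,
\[
U^{(\lambda)}({\bf q},t)=U({\bf q},t)\prod_{k=1}^{N}\bigl[1-e^{-\lambda f_k({\bf q})}\bigr],
\]
with $f_k$ a nonnegative $C^1$ function vanishing (together with its gradient) exactly on the $k$th line. This buys three things at once. First, on each scratch both $U^{(\lambda)}$ and $\nabla U^{(\lambda)}$ vanish identically for every $\lambda>0$, so the particle experiences zero force and the straight line with constant velocity $m({\bf q}_k(t_f)-{\bf q}_k(t_i))/(t_f-t_i)$ is an exact trajectory for all $\lambda$; no tangential force has to be manufactured. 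Second, the product form is globally $C^1$ automatically, so the ``smooth gluing'' problem you identify as the main obstacle simply does not arise. Third, convexity of $B$ guarantees the segments stay in $B$, giving confinement for free; the paper then only needs a short argument (no three endpoints collinear, plus a small perturbation) to rule out collisions. For the quantum limit the paper argues in the momentum representation: since $U^{(\lambda)}=U+O(\lambda^{-\infty})$ off a null set, the Fourier transform $\widetilde{U^{(\lambda)}}$ agrees with $\widetilde{U}$ up to $O(\lambda^{-\infty})$, which drives (\ref{LimPropertySchratchedPotential})--(\ref{LimPropertySchratchedWaveFunc}).

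Your tube-and-force construction is not wrong, but it is doing more work than Theorem~\ref{theorem1} requires; it is essentially the machinery the paper deploys later for Theorem~2, where $B$ is merely path connected and straight scratches are unavailable. Two small points to tighten if you keep your route: you need the modification \emph{on} the curve to be $\lambda$-independent so that $\gamma_j$ is an exact solution for every $\lambda>0$ (otherwise (\ref{ApproximationBoundProbabilities}) would only hold in the limit), and the ``passing to a subsequence'' at the end gives a.e.\ convergence only along that subsequence, which is slightly weaker than the stated full-limit conclusion.
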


\begin{proof}
The main motif of the current paper is the employment of the procedure of {\it ``scratching''} the potential.  We call $U^{(\lambda)}({\bf q}, t)$  the scratched version of the potential $U({\bf q}, t)$ if
\begin{eqnarray}\label{DeffScratchedPotential}
U^{(\lambda)}({\bf q}, t) = U({\bf q}, t) \prod_{k=1}^N \left[ 1 - e^{-\lambda f_k({\bf q})} \right],
\end{eqnarray}
where $f_k({\bf q})$ is continuously differentiable, and the equation $f_k({\bf q})=0$ defines a differentiable curve, which is called the $k^{\rm th}$ scratch, and moreover, we assume that $\partial f_k /\partial q_j = 0$ along the $k^{th}$ scratch. One readily notices that
\begin{eqnarray}\label{AssynptoticRepresPotentialOntheScratch} 
U^{(\lambda)}({\bf q}, t) = \left\{
\begin{array}{ccc}
0 & : & \bigvee_{k=1}^N \left[ f_k({\bf q}) = 0 \right], \\
U({\bf q}, t) + O\left( \lambda^{-\infty} \right) & : & \mbox{otherwise},
\end{array}\right.
\end{eqnarray}
where $\bigvee$ is the logical disjunction and the symbol $O\left( \lambda^{-\infty} \right)$ denotes a term that decays faster than any power of $1/\lambda$ as $\lambda\to\infty$. Equation (\ref{AssynptoticRepresPotentialOntheScratch}) can be rephrased as the potential along the scratches vanishes.

One of most important property of the procedure of scratching is that
\begin{eqnarray}\label{AssynptoticRepresForceOntheScratch}
\frac{\partial U^{(\lambda)}}{\partial q_j} = \left\{
\begin{array}{ccc}
0 & : & \bigvee_{k=1}^N \left[ f_k({\bf q}) = 0 \right], \\
\partial U/\partial q_j + O\left( \lambda^{-\infty} \right) & : & \mbox{otherwise}.
\end{array}\right.
\end{eqnarray}
This equation implies that a classical particle experiences no force along the scratches. Using this idea, we force $N$ classical particles to move along straight lines by choosing
\begin{eqnarray}\label{ScratchForLinearTrajectories}
f_k ({\bf q}) &=& \left( {\bf q} - {\bf q}_k (t_i) - \left\{ {\bf q}_k(t_f) - {\bf q}_k (t_i) \right\} \right.\\
& \times& \left. \left\{ [{\bf q}]_1 - [{\bf q}_k(t_i)]_1 \right\}/\left\{[{\bf q}_k(t_f)]_1 - [{\bf q}_k(t_i)]_1 \right\} \right)^2, \nonumber
\end{eqnarray} 
where $[{\bf a}]_j$ denotes the $j^{\rm th}$ component of a vector ${\bf a}$. Since
\begin{eqnarray}
f_k({\bf q}) = 0 \Longleftrightarrow {\bf q} = {\bf q}_k(t_i) + \left[ {\bf q}_k(t_f) - {\bf q}_k(t_i)\right]\xi, \\
\xi =\left\{ [{\bf q}]_1 - [{\bf q}_k(t_i)]_1 \right\}/\left\{[{\bf q}_k(t_f)]_1 - [{\bf q}_k(t_i)]_1 \right\}, \nonumber
\end{eqnarray}
and also $\xi\left( {\bf q}_k(t_i) \right) = 0$, $\xi\left( {\bf q}_k(t_f) \right) = 1$, 
we indeed conclude that equation $f_k({\bf q}) = 0$ defines the line that goes through the points $ {\bf q}_k(t_i)$ and $ {\bf q}_k(t_f)$; additionally, $\partial f_k /\partial q_j = 0$ on that line.

According to Eq. (\ref{ConditionOfConfinements}), $\sum_{k=1}^n P_k(t_i) = \sum_{k=1}^n P_k(t_f) = 1$; hence, the sequence of real numbers $P_1(t_i), \ldots, P_n(t_i), P_1(t_f), \ldots, P_n(t_f)$ obeys the conditions of the Lemma presented in the Appendix. Therefore, we conclude that $\forall Q \in \mathbb{N}$, $Q > n^{2n}$, there exists (positive) integers $N, N_1(t_i), \ldots, N_n(t_i), N_1(t_f), \ldots, N_n(t_f)$, such that
\begin{eqnarray}\label{ConservationOfNumberClassparticles}
N = \sum_{k=1}^n N_k (t_i) = \sum_{k=1}^n N_k(t_f),
\end{eqnarray}
and Eq. (\ref{ApproximationBoundProbabilities}) is valid if we set 
\begin{eqnarray}\label{DiophantineApproxOfClassicalProb} 
\pi_k^{(\lambda)} (t_{i,f}) = N_k(t_{i,f})/N.
\end{eqnarray}
According to Eqs. (\ref{ConservationOfNumberClassparticles}) and (\ref{DiophantineApproxOfClassicalProb}), it is justifiable to interpret $N$ as the total number of particles in the classical ensemble, and $N_k(t_i)$ [$N_k(t_f)$] as the number of classical particles located in the region $B_k$ at time moment $t_i$ [$t_f$].

Note that there are in fact two possible interpretations for the positive integer $N$. First, we could assume that we have $N$ identical copies (i.e., $N$ realizations) of the one-particle classical system with the Hamiltonian (\ref{ClassicalHamiltonian1}). (This is a usual construction in literature.) Obviously, the particle from one such a copy cannot interact with the particle from another copy. Second, the interpretation utilized here as well as in Theorem \ref{theorem2} is that we assume to have one $N$-particle classical system (i.e., ``gas'' of $N$ particles), where these particles can interact only when they collide with each other. Since we will locate scratches and select the initial conditions to completely eliminate collisions, both these interpretations are indeed equivalent. Nevertheless, the latter one is more general.

Let us arrange the motion of the $N$ particles according to the interpretation. This, however, can be easily achieved by means of the scratched potential [Eq. (\ref{DeffScratchedPotential})] with Eq. (\ref{ScratchForLinearTrajectories}), where ${\bf q}_k(t_{i,f})$ denote the initial and final positions of the $k^{\rm th}$ particle, correspondingly. Note that each particle must be placed in its own scratch. The final problem is to select points ${\bf q}_k(t_{i,f})$ such that no collision between particles is possible. 

Let us construct the finite sets $C_k(t_{i,f})$, $k=1,\ldots,n$, such that
$C_k(t_{i,f}) \subset B_k \setminus\partial B_k$, $\left| C_k(t_{i,f})\right| = N_k(t_{i,f})$,  
$$
\bigcup_{k=1}^n C_k(t_{i,f}) = \left\{ {\bf q}_1(t_{i,f}), \ldots, {\bf q}_N(t_{i,f})\right\},
$$
and the following property is satisfied: 
\begin{eqnarray}\label{PropertyOFCset}
\forall {\bf x}, {\bf y}, {\bf z} \in C  \qquad \forall t \in \mathbb{R} \qquad {\bf z} \neq {\bf x} + t( {\bf y} - {\bf x}),
\end{eqnarray}
where $C = \bigcup_{k=1}^n C_k(t_i) \cup C_k(t_f)$ and $|C| = 2N$ (here, $|A|$ denotes the number of elements of a finite set $A$). In other words, $C$ is a set of all the initial and final positions for the classical particles; $C_k(t_i)$ [$C_k(t_f)$] is a set of all initial [final] positions for particles in the region $B_k$. All $2N$ elements of the set $C$ are distinct points, and no three points from $C$ lie on a line. Such selected points ${\bf q}_k(t_{i,f})$ guarantee not only that no two particles share a scratch, but also that the collisions are prevented (nevertheless only partially). 

We shall prove by contradiction that the described above set $C$ exists.  The justification of the existence of $C$ is based on the fact that $\Lambda(B_k) \neq 0$, $k=0,\ldots,n$.  Let us fix integers $k,l,m=1,\ldots,n$ ($k$, $l$, and $m$ may coincide) and introduce the notation $A^{(1)} = B_k \setminus \partial B_k$,  $B^{(1)} = B_l \setminus \partial B_l$, and $C^{(1)} = B_m \setminus\partial B_m$. The statement 
\begin{eqnarray}\nonumber
\exists {\bf x} \in A^{(1)}, {\bf y} \in B^{(1)} \: \forall {\bf z} \in C^{(1)} \: \exists t \in \mathbb{R} \quad {\bf z} = {\bf x} + t( {\bf y} - {\bf x})
\end{eqnarray}
must be false because otherwise it implies that the set $C^{(1)}$ is a line,  hence $\Lambda\left(C^{(1)}\right) = 0$, which contradicts  assumption 2. Therefore, the following statement is true:
\begin{eqnarray}\nonumber
\forall {\bf x} \in A^{(1)}, {\bf y} \in B^{(1)} \: \exists {\bf z} \in C^{(1)} \: \forall t \in \mathbb{R} \quad {\bf z} \neq {\bf x} + t( {\bf y} - {\bf x})
\end{eqnarray}
Let us pick an arbitrary triple of such points and denote them by ${\bf x}^{(1)}$, ${\bf y}^{(1)}$, and ${\bf z}^{(1)}$, respectively. Now construct the sets $A^{(2)} = A^{(1)} \setminus \left\{ {\bf x}^{(1)} \right\}$, $B^{(2)} =B^{(1)} \setminus \left\{ {\bf y}^{(1)} \right\}$, and $C^{(2)} = C^{(1)} \setminus \left\{ {\bf z}^{(1)} \right\}$. The statement
$$
\exists {\bf x} \in A^{(2)}, {\bf y} \in B^{(2)} \: \forall {\bf z} \in C^{(2)} \: \exists t \in \mathbb{R} \quad {\bf z} = {\bf x} + t( {\bf y} - {\bf x})
$$
is false because  $\Lambda\left( C^{(2)} \right) =  \Lambda\left( C^{(1)} \right)$. One readily observes that this iteration procedure can be continued an arbitrary number of times; hence, the sets $C_k(t_{i,f})$ exist for any $N$.

Recall that ${\bf q}_k(t_{i,f})$ denote the initial and final positions of the $k^{\rm th}$ particle. If the initial momentum of the $k^{\rm th}$ particle is chosen as
$$
{\bf p}_k = m \left[ {\bf q}_k(t_f) - {\bf q}_k(t_i) \right] / (t_f - t_i),
$$ 
then, according to the Hamilton equations, its trajectory is 
$$
{\bf q}_k(t) = {\bf q}_k(t_i) + (t-t_i)\left[ {\bf q}_k(t_f) - {\bf q}_k(t_i)\right] / (t_f - t_i).
$$
Assume that the $j^{\rm th}$ and $l^{\rm th}$ particles collide, viz.,
$\exists t \in [t_i, t_f]$ such that ${\bf q}_j(t) = {\bf q}_l(t)$, i.e., 
\begin{eqnarray}
&& \frac{t-t_i}{t_f - t_i} = \frac{ \left[ {\bf q}_j(t_i) - {\bf q}_l(t_i) \right]_1 }{ \left[ {\bf q}_l(t_f) - {\bf q}_j(t_f) + {\bf q}_j(t_i) - {\bf q}_l(t_i) \right]_1 } \nonumber\\
&&= \ldots = \frac{ \left[ {\bf q}_j(t_i) - {\bf q}_l(t_i) \right]_D }{ \left[ {\bf q}_l(t_f) - {\bf q}_j(t_f) + {\bf q}_j(t_i) - {\bf q}_l(t_i) \right]_D }.
\end{eqnarray}
However, because $B$ is a Hausdorff space, we can always perturb, say $\left[ {\bf q}_j(t_i)\right]_1$, by some small value such that this chain of equalities is violated, but the sets $C_k(t_{i,f})$ preserves their properties.

We note that Eq. (\ref{ClassicalConfimentCondition}) follows from the fact that the trajectories are linear and the convexity of $B$ (assumption 1).

The Schr\"{o}dinger equation in the momentum representation for the scratched quantum system with the Hamiltonian (\ref{TimeDepSchratchHamiltonian}) reads
\begin{eqnarray}\label{SchrodingerEqMomentumRepresentation}
i\hbar \frac{\partial}{\partial t} \Phi^{(\lambda)}({\bf p}, t) = \Phi^{(\lambda)} ({\bf p}, t) {\bf p}^2 / (2m) \quad \quad\quad \nonumber\\
 + \int_{\mathbb{R}^D} d^D {\bf k}\, \widetilde{U^{(\lambda)}} ({\bf k} - {\bf p}, t) \Phi^{(\lambda)}({\bf k}, t),
\end{eqnarray}
where 
\begin{eqnarray}
\Phi^{(\lambda)}({\bf p}, t) &=& \int_B \frac{d^D {\bf q}}{(2\pi\hbar)^{D/2}} \, \Psi^{(\lambda)} ({\bf q}, t) e^{ -i {\bf p}\cdot{\bf q} /\hbar }, \\
\widetilde{U^{(\lambda)}} ({\bf p}, t) &=& \int_B \frac{d^D {\bf q}}{(2\pi\hbar)^{D}} \, U^{(\lambda)} ({\bf q}, t) e^{ i {\bf p}\cdot{\bf q} /\hbar }.
\end{eqnarray}
The integral over ${\bf k}$ in Eq. (\ref{SchrodingerEqMomentumRepresentation}) manifests the dynamical non-locality of quantum mechanics. From Eq. (\ref{AssynptoticRepresPotentialOntheScratch}) and the fact that the value of an integral is unchanged if the integrand is modified on a measure zero set, we conclude that 
\begin{eqnarray}\label{Utilde_lambda}
\widetilde{U^{(\lambda)}} ({\bf p}, t) = \widetilde{U} ({\bf p}, t) + O\left( \lambda^{-\infty}\right).
\end{eqnarray}
Assumption (\ref{UIntegrable1}) guarantees the existence of the Fourier transform of $U$ as well as $U^{(\lambda)}$. Physically speaking, Eq. (\ref{Utilde_lambda}) is a direct consequence of the dynamical non-locality. 

Equations (\ref{LimPropertySchratchedPotential}) and (\ref{LimPropertySchratchedWaveFunc}) follow from the fact that the Fourier transform is ``insensitive'' to measure zero set modifications. 
\end{proof}

\begin{theorem}\label{theorem2}
Assume that $D \geqslant 3$ and
\begin{enumerate}
\item $B\subseteq\mathbb{R}^D$, $0 < \Lambda(B) \leqslant \infty$, is an open path connected set;
\item $B_1, \ldots, B_n$ is a closed cover of $B$ such that $B \subset \bigcup_{j=1}^n B_j$, $\Lambda(B) = \sum_{j=1}^n \Lambda(B_j)$, $\Lambda(B_j) \neq 0$, $j=1,\ldots,n$, $\Lambda(B_j \cap B_k) = 0$, $\forall k\neq j$;
\item A quantum system with the Hamiltonian 
\begin{eqnarray}\label{OriginalQuantumHamiltonian2}
\hat{H} = \hat{\bf p}^2 /(2m) + U({\bf q}),
\end{eqnarray}
is given, where $U({\bf q})$ being a twice continuously differentiable function $\forall {\bf q}\in B$, and
\begin{eqnarray}\label{ConditionOfPossitivityOfU}
U({\bf q}) > 0, \quad \forall {\bf q}\in B, \qquad \int_B \left|U({\bf q})\right|  d^D {\bf q} < \infty;
\end{eqnarray}
\item The wave function of the system, $\Psi ({\bf q}, t)$, evolves according to the Schr\"{o}dinger equation $\forall t \in [t_i, t_f]$ ($-\infty < t_i < t_f < +\infty$) and 
\begin{eqnarray}\label{ConditionOfConfinements2}
\int_B \left| \Psi({\bf q}, t) \right|^2 d^D {\bf q} =1, \qquad \forall t \in [t_i, t_f].
\end{eqnarray}
Let
\begin{eqnarray}\label{Deff_Pk_2}
P_k (t) = \int_{B_k} \left| \Psi({\bf q}, t) \right|^2 d^D {\bf q},
\end{eqnarray}
denote the probability to find a quantum particle in the region $B_k$ at time moment $t$;
\item $\widetilde{B}_1, \ldots, \widetilde{B}_n$ is a closed cover of $\mathbb{R}^D$ such that $\mathbb{R}^D = \bigcup_{j=1}^n \widetilde{B}_n$, $\Lambda\left(\widetilde{B}_j \right) \neq 0$, $j=1,\ldots,n$, $\Lambda\left( \widetilde{B}_j \cap \widetilde{B}_k \right) = 0$, $\forall k\neq j$, and
\begin{eqnarray}\label{Deff_tilde_Pk_2}
\widetilde{P}_k (t) = \int_{\widetilde{B}_k} \left| \Phi({\bf p}, t) \right|^2 d^D {\bf p},
\end{eqnarray}
being the probability that the momentum of the quantum particle is in the region $\widetilde{B}_k$ and $\Phi ({\bf p}, t)$ is the wave function in the momentum representation;
\item $t_i = t_1 < t_2 < \ldots < t_{K-1} < t_{K} = t_f$;
\end{enumerate}
Then, for an arbitrary natural number $Q > n^{2Kn}$, there exists an ensemble of $N$ ($0 < N \leqslant Q$) classical particles with the Hamiltonian function
\begin{eqnarray}\label{ClassicalSystem2}
\mathrsfs{H}^{(\lambda)} ({\bf p}, {\bf q}) = {\bf p}^2 /(2m) + U^{(\lambda)} ({\bf q}),
\end{eqnarray}
and a quantum system with the wave function $\Psi^{(\lambda)}({\bf q}, t)$ and the Hamiltonian 
\begin{eqnarray}
\hat{H}^{(\lambda)} = \hat{\bf p}^2 /(2m) + U^{(\lambda)}({\bf q}),
\end{eqnarray}
where $U^{(\lambda)}({\bf q})$, $\forall \lambda>0$, is a twice continuously differentiable function $\forall {\bf q}\in B$, such that the equalities 
\begin{eqnarray}
\lim_{\lambda\to\infty} U^{(\lambda)}({\bf q}) &=&U({\bf q}), \label{LimPropertySchratchedPotential2}\\
\lim_{\lambda\to\infty} \Psi^{(\lambda)} ({\bf q}, t) &=& \Psi ({\bf q}, t), \label{LimPropertySchratchedWaveFunc2}
\end{eqnarray}
are valid $\forall t \in [t_i, t_f]$ and for almost all ${\bf q} \in B$ (Eqs. (\ref{LimPropertySchratchedPotential2}) and (\ref{LimPropertySchratchedWaveFunc2}) may be violated on a measure zero set). Furthermore, there exits the initial condition for the classical ensemble such that
\begin{eqnarray}
\max_{k=1,\ldots,n} \left\{ \left| P_k (t_j ) - \pi_k^{(\infty)} (t_j ) \right| \right\} < \frac{1}{NQ^{\frac{1}{2Kn}}}, \label{ApproximationBoundProbabilities2}\\
\max_{k=1,\ldots,n} \left\{ \left| \widetilde{P}_k (t_j ) - \widetilde{\pi}_k^{(\infty)} (t_j ) \right| \right\} < \frac{1}{NQ^{\frac{1}{2Kn}}},  \label{ApproximationBoundProbabilitiesMomenta2}\\
 \forall j=1,\ldots,K,  \nonumber
\end{eqnarray}
where $\pi_k^{(\lambda)} (t)$ denotes the probability to find a classical particle in the region $B_k$ at time moment $t$ $\left[ \pi_k^{(\infty)}(t) = \lim_{\lambda\to\infty} \pi_k^{(\lambda)} (t) \right]$ and $\widetilde{\pi}_k^{(\lambda)} (t)$ denotes the probability that the momentum of the classical particle is in the region $\widetilde{B}_k$ at time moment $t$ $\left[ \widetilde{\pi}_k^{(\infty)}(t) = \lim_{\lambda\to\infty}  \widetilde{\pi}_k^{(\lambda)} (t) \right]$. Additionally, the equalities 
\begin{eqnarray}
\sum_{k=1}^n \pi_k^{(\infty)} (t)=1, \qquad \sum_{k=1}^n \widetilde{\pi}_k^{(\infty)} (t) =1,  \label{ClassicalConfimentCondition2}
\end{eqnarray}
are valid for almost all  $t \in [t_i, t_f]$\footnote{
There may be time moments when the classical particle (the momentum of the classical particle) is on the boundary of one of the region $B_k$ ($\widetilde{B}_k$).
}, i.e., the motion of the classical ensemble is confined in the regions $B$.
\end{theorem}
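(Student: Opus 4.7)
My plan is to adapt the scratching construction of Theorem~1, now with three new ingredients: match $K$ time moments rather than two, simultaneously control momentum-space marginals, and use smoothly-rounded piecewise-linear scratches in place of single straight lines. First I would apply the Appendix lemma to the combined list of $2Kn$ real numbers
\begin{equation*}
\{P_k(t_j)\}_{k,j}\cup\{\widetilde P_k(t_j)\}_{k,j},\qquad k=1,\ldots,n,\ j=1,\ldots,K,
\end{equation*}
whose marginal sums in $k$ are $1$ by (\ref{ConditionOfConfinements2}) and by unitarity of the Fourier transform. Because $Q>n^{2Kn}$, the lemma returns positive integers $N$ and $N_k(t_j)$, $\widetilde N_k(t_j)$ with $\sum_k N_k(t_j)=\sum_k\widetilde N_k(t_j)=N$, realising the bounds (\ref{ApproximationBoundProbabilities2}) and (\ref{ApproximationBoundProbabilitiesMomenta2}) via the identifications $\pi_k^{(\infty)}(t_j)=N_k(t_j)/N$ and $\widetilde\pi_k^{(\infty)}(t_j)=\widetilde N_k(t_j)/N$.

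Next I would assign to each particle $\alpha=1,\ldots,N$ a joint label $(k_{\alpha,j},\widetilde k_{\alpha,j})$ at every $t_j$ (any joint law consistent with the marginal counts will do) and pick corner points ${\bf q}_\alpha^{(j)}\in B_{k_{\alpha,j}}\setminus\partial B_{k_{\alpha,j}}$ together with auxiliary intermediate nodes, so that (i)~each resulting polygonal segment lies inside $B$, which is possible since $B$ is open and path-connected; (ii)~the implied segment momentum ${\bf p}_\alpha^{(j)}=m({\bf q}_\alpha^{(j+1)}-{\bf q}_\alpha^{(j)})/(t_{j+1}-t_j)$ lies in $\widetilde B_{\widetilde k_{\alpha,j}}$; (iii)~all corner points across particles are distinct, no three are collinear, and no two particle paths cross at the same time. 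The existence of such data follows from the iterative generic-position argument of Theorem~1, now repeated over the larger list of constraints; the hypothesis $D\geqslant 3$ supplies the codimension needed to satisfy the position, momentum and non-collision requirements simultaneously. The scratches $f_{\alpha,j}$ are squared-distance-to-segment functions in the spirit of (\ref{ScratchForLinearTrajectories}), and $U^{(\lambda)}$ is assembled by the product formula (\ref{DeffScratchedPotential}); the limit statements (\ref{LimPropertySchratchedPotential2})--(\ref{LimPropertySchratchedWaveFunc2}) and the Fourier estimate analogous to (\ref{Utilde_lambda}) then follow as in Theorem~1 because the scratches form a measure-zero set.

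The hard part will be showing that, under the Hamilton equations of $\mathrsfs{H}^{(\lambda)}$, each classical particle actually traces its prescribed path as $\lambda\to\infty$. Along any straight segment the Theorem~1 analysis transfers verbatim: the scratched gradient vanishes on the scratch, tangential motion is free, and $|{\bf p}|$ is preserved because $U^{(\lambda)}=0$ on the scratch. The real obstacle is the interior corners $1<j<K$, where a genuinely sharp scratch would let the particle continue tangentially instead of turning onto the next segment. I would resolve this by rounding each corner over a length scale $\epsilon(\lambda)$ with $\lambda^{-1/2}\ll\epsilon(\lambda)\to 0$: since the transverse structure of $U^{(\lambda)}$ near the scratch is a harmonic well of spring constant $\sim U\lambda$, in the narrow-channel regime the particle is adiabatically guided along the rounded arc by centripetal confinement, its tangential speed unchanged and its transverse excursions shrinking to zero. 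In the limit, each particle passes through every ${\bf q}_\alpha^{(j)}$ at exactly $t_j$ with post-corner momentum ${\bf p}_\alpha^{(j)}\in\widetilde B_{\widetilde k_{\alpha,j}}$ (using the unique incident segment at $j=1$ and $j=K$), delivering (\ref{ApproximationBoundProbabilities2})--(\ref{ApproximationBoundProbabilitiesMomenta2}) and the confinement identities (\ref{ClassicalConfimentCondition2}).
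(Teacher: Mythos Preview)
Your proposal shares the overall architecture of the paper's proof --- Diophantine lemma, then build scratches realising the target occupation numbers --- but it diverges in the mechanism for controlling the motion \emph{along} the scratches, and that divergence produces a genuine gap.

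The problem is the step where you note that the rounded corners leave the tangential speed unchanged. You are right that $U^{(\lambda)}=0$ everywhere on the scratch (including the rounded arcs), so by energy conservation each particle moves with constant speed $v_\alpha$ for \emph{all} $t\in[t_i,t_f]$. But then the momentum at every $t_j$ has the same magnitude $m v_\alpha$, and you have lost all control over $|{\bf p}_\alpha^{(j)}|$ as $j$ varies. Concretely: your formula ${\bf p}_\alpha^{(j)}=m({\bf q}_\alpha^{(j+1)}-{\bf q}_\alpha^{(j)})/(t_{j+1}-t_j)$ can be the actual momentum only if the path from ${\bf q}_\alpha^{(j)}$ to ${\bf q}_\alpha^{(j+1)}$ is the straight segment of length $v_\alpha(t_{j+1}-t_j)$; inserting auxiliary nodes lengthens the path and changes the arrival time, while the magnitude stays pinned at $m v_\alpha$ regardless. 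Hence you cannot in general realise the target counts $\widetilde N_k(t_j)$ whenever the cells $\widetilde B_k$ distinguish magnitudes (e.g.\ if some $\widetilde B_k$ lies entirely outside the sphere of radius $m\max_\alpha v_\alpha$), and (\ref{ApproximationBoundProbabilitiesMomenta2}) fails. Even disregarding momenta, the constant-speed constraint ties all inter-waypoint arc lengths to the single number $v_\alpha$, which is far more rigid than a generic-position argument can accommodate across $K>2$ times.

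The paper handles exactly this obstruction by a different device. It takes one smooth (polynomial) curve ${\bf q}^{(l)}(s)$ per particle, uses a Hessian analysis together with Arnold's theorem on motion in a strong transverse potential to reduce the $\lambda\to\infty$ dynamics to a holonomically constrained one-degree-of-freedom system in the arc parameter $s^{(l)}$, and then \emph{augments} the scratched potential by terms $V_l\,e^{-\lambda f_l({\bf q})}$ supported on the $l$-th scratch, so that $U^{(\lambda)}$ equals $V_l$ (not $0$) there. The resulting Lagrange equation of the second kind,
\[
m\,\frac{d}{dt}\!\left[\Bigl(\frac{d{\bf q}^{(l)}}{ds^{(l)}}\Bigr)^{\!2}\dot s^{(l)}\right]=-\frac{dV_l}{ds^{(l)}},
\]
has the freedom in $V_l$ as precisely the handle needed to prescribe both $s^{(l)}(t_j)=s_j$ and $\dot s^{(l)}(t_j)=c_{l,j}$ independently; the paper closes by invoking an inverse-problem result for ODEs to manufacture such $V_l$. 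Your rounded-corner mechanism can redirect momentum but cannot change its magnitude, so it cannot substitute for this step; to repair the argument you would need some analogue of the $V_l$ terms that injects or removes tangential kinetic energy between the sampling times.
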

\begin{proof}
The main idea of the proof is, again, the scratching procedure. Assume that all the scratches given implicitly in the region $B$, i.e., the set of solutions of the following system of equations
\begin{eqnarray}\label{ImplicitDeffScratches2}
F_1^{(l)} ({\bf q}) = 0, \ldots, F_{D-1}^{(l)} ({\bf q}) = 0, 
\end{eqnarray}
where each function is continuously differentiable, defines the $l^{\rm th}$ scratch ($l=1,\ldots,N$). We assume further that there exists integer $j$ ($1\leqslant j \leqslant D$) such that the Jacobi matrix
$$
J_l(j) = \frac{\partial\left( F_1^{(l)}, \ldots, F_{D-1}^{(l)} \right)}{\partial\left( q_1, \ldots, q_{j-1}, q_{j+1}, \ldots, q_D \right)}
$$ 
is non singular on the $l^{\rm th}$ scratch. We point out that it is a natural assumption because according to the implicit function theorem, if $J_l(j)$ is non singular then Eq. (\ref{ImplicitDeffScratches2}) defines (locally) a curve.

Defying auxiliary functions
\begin{eqnarray}\label{DeffAuxiliaryf_2}
f_l ({\bf q}) = \sum_{j=1}^{D-1} \left[ F_j^{(l)}({\bf q}) \right]^2,
\end{eqnarray}
we introduce the scratched potential 
\begin{eqnarray}\label{ScratchedPotential2}
U^{(\lambda)} ({\bf q}) = U({\bf q}) \prod_{l=1}^N \left[ 1 - e^{-\lambda f_l({\bf q})} \right].
\end{eqnarray}
One readily verifies from definition (\ref{DeffAuxiliaryf_2}) that $f_l ({\bf q} ) =0$ and $\partial f_l({\bf q}) /\partial {\bf q} = {\bf 0}$ if and only if ${\bf q}$ lines on the $l^{\rm th}$ scratch. Setting $F_D^{(l)}({\bf q}) \equiv 0$, $l = 1,\ldots,N$, let us scrutinize the Hessian matrix of the function $f_l ({\bf q} )$ on the $l^{\rm th}$ scratch,
\begin{eqnarray}
\left. H(f_l) \right|_l &=& \left( \left. \frac{\partial^2 f_l}{\partial q_r\partial q_k} \right|_l \right) = 2\left( \sum_{j=1}^D \frac{\partial F_j^{(l)}}{\partial q_r} \frac{\partial F_j^{(l)}}{\partial q_k} \right) \nonumber\\
&& = 2\left( {\bf x}_k^{(l)} \cdot {\bf x}_r^{(l)} \right) \label{GrammianRepresentationF}\\
&& = \left. 2J_l^T J_l \right|_l, \label{ProductOfjacobianRepresentationF}
\end{eqnarray}
where ${\bf x}_k^{(l)} = \left( \partial F_1^{(l)}/\partial q_k, \ldots, \partial F_D^{(l)}/\partial q_k  \right)$ and 
$$
J_l = \partial\left( F_1^{(l)}, \ldots, F_D^{(l)} \right) / \partial\left( q_1, \ldots, q_D \right).
$$
On the one hand, according to Eq. (\ref{GrammianRepresentationF}), $\left. H(f_l) \right|_l$ is a Gramian matrix; thus,  $\left. H(f_l) \right|_l$ is positive semidefinite. On the other hand, Eq. (\ref{ProductOfjacobianRepresentationF}) manifests that the Hessian $\left. H(f_l) \right|_l$ is of $D-1$ rank because $J_l$ is singular (since $F_D^{(l)}=0$) and the following $D-1$ minor of the Hessian:
$$
\left(  \left. \frac{\partial^2 f_l}{\partial q_r\partial q_k} \right|_l, \, {r \neq j \atop k \neq j} \right) = \left. 2J_l^T(j) J_l(j) \right|_l
$$  
is not singular (since $J_l(j)$ is non singular). Hence, we conclude that the Hessian $\left. H(f_l) \right|_l$ has one zero eigenvalue and $D-1$ positive eigenvalues. 

We now show that the eigenvector corresponding to the zero eigenvalue is the tangent vector of the scratch. Let ${\bf q} = {\bf q}^{(l)}(s)$ be a parametric representation of the $l^{\rm th}$ scratch. Then, 
$$
F_1^{(l)} \left( {\bf q}^{(l)}(s) \right) = 0, \ldots, F_{D-1}^{(l)} \left( {\bf q}^{(l)}(s) \right) = 0,
$$
and consequently,
$
d F_1^{(l)} / ds = 0, \ldots, d F_{D-1}^{(l)} /ds = 0
$.
Since 
\begin{eqnarray}
\frac{ d{\bf q}^{(l)} }{ds} \cdot \left. H(f_l) \right|_l \left( \frac{ d{\bf q}^{(l)} }{ds} \right)^T = 2\sum_{j=1}^{D-1} \left( \frac{d F_j^{(l)}}{ds} \right)^2 = 0,
\end{eqnarray}
and the fact that the vector $d{\bf q}^{(l)} / ds$ is proportional to the tangent vector of the $l^{\rm th}$ scratch, we have confirmed our statement. 

Equations (\ref{AssynptoticRepresPotentialOntheScratch}) and (\ref{AssynptoticRepresForceOntheScratch}) are also valid in the case of the potential (\ref{ScratchedPotential2}). In other words, the scratches are equilibrium positions. Therefore, to analyze dynamics of the classical system (\ref{ClassicalSystem2}), we can linearize the Hamilton equations in a neighbourhood of the $l^{\rm th}$ scratch (such a procedure is  justified by the theorem on page 100 of Ref. \cite{Arnold1989}).

The Hessian of the scratched potential (\ref{ScratchedPotential2}) on the $l^{\rm th}$ scratch reads 
\begin{eqnarray}\label{HessianScratchedPotential2}
\left. H\left( U^{(\lambda)} \right) \right|_l &=& \lambda U \left. H(f_l) \right|_l \prod_{j=1, \: j \neq l}^N \left( 1-e^{-\lambda f_j} \right).
\end{eqnarray}
To understand the qualitative behaviour of the classical system, one needs to analyze the eigenvalues of the Hessian (\ref{HessianScratchedPotential2}) (see the theorem on page 104 of Ref. \cite{Arnold1989}). Recalling the properties of $\left. H(f_l) \right|_l$ and inequality (\ref{ConditionOfPossitivityOfU}), we conclude that the Hessian (\ref{HessianScratchedPotential2}) has $D-1$ positive eigenvalues and one zero eigenvalue, with the corresponding eigenvector being the tangent vector of the $l^{\rm th}$ scratch, if no scratches intersect and $\lambda >0$.

According to the theorem on page 76 of Ref. \cite{Arnold1989}, if  a classical particle is placed in a scratch, then in the limit $\lambda\to\infty$, the motion of the particle is constrained to the scratch and its velocity can be only collinear to the tangent vector of the scratch. We are now in the position to formulate {\it the fundamental property of the scratches:} the procedure of scratching a potential of some unconstrained classical mechanical system (implicitly) introduces holonomic constrains in the limit $\lambda\to\infty$. However, these constrains can be ``felt'' by a classical particle only if it is placed in one of the scratches, thus, the name {\it implicit holonomic constrains}. 

The same conclusion can be reached intuitively as follows: If a classical particle, being initially in a scratch, does not move along the scratch, then according to Eq. (\ref{HessianScratchedPotential2}), the particle experiences the force of the magnitude proportional to $\lambda$ that pushes the particle back to the scratch. Hence, the motion of the particle evidently becomes constrained in the limit $\lambda\to\infty$.

The sequence of real numbers $P_1(t_1), \ldots, P_n(t_1), \ldots, P_1(t_K), \ldots, P_n(t_K)$, $\widetilde{P}_1(t_1), \ldots, \widetilde{P}_n(t_1), \ldots, \widetilde{P}_1(t_K), \ldots, \widetilde{P}_n(t_K)$ obeys the conditions of the Lemma from the Appendix because 
$\sum_{k=1}^n P_k(t_j) = \sum_{k=1}^n \widetilde{P}_k(t_j) = 1$, $j=1,\ldots,K$. Therefore, $\forall Q \in \mathbb{N}$, $Q>n^{2Kn}$,  there exists (positive) integers $N, N_1(t_1), \ldots, N_n(t_1), \ldots, N_1(t_K), \ldots, N_n(t_K)$, $\widetilde{N}_1(t_1), \ldots, \widetilde{N}_n(t_1), \ldots, \widetilde{N}_1(t_K), \ldots, \widetilde{N}_n(t_K)$ such that inequalities (\ref{ApproximationBoundProbabilities2}) and (\ref{ApproximationBoundProbabilitiesMomenta2}) hold if
\begin{eqnarray}\label{ClassicalProbabilityRepresentation2}
\pi_k^{(\infty)}(t_j) = N_k (t_j) / N, \qquad \widetilde{\pi}_k^{(\infty)}(t_j) = \widetilde{N}_k (t_j) / N, \nonumber\\
j=1,\ldots,K, \qquad k=1,\ldots,n.
\end{eqnarray}
Similarly to Theorem \ref{theorem1}, we interpret $N$ as the total number of classical particles in the ensemble, $N_k(t_j)$ -- the number of classical particles in the region $B_k$ at time moment $t_j$, and $\widetilde{N}_k(t_j)$ -- the number of classical particles with momenta in the region $\widetilde{B}_k$ at time moment $t_j$.

For the given natural numbers $N_k(t_j)$, we can construct simple (i.e., non self-intersecting) paths ${\bf q}^{(1)}(s), \ldots, {\bf q}^{(N)}(s) : [0,1] \to B$, such that there exist real numbers, $0=s_1 < s_2 < \ldots < s_K=1$, and
\begin{eqnarray}\label{ConditionOnIntersectionBk2}
\left| \left\{ {\bf q}^{(1)}(s_j), \ldots, {\bf q}^{(N)}(s_j)\right\} \bigcap \left( B_k \setminus\partial B_k \right)\right| = N_k(t_j),
\end{eqnarray}
where $k=1,\ldots,n$ and $j=1,\ldots,K$. Moreover, we put an additional requirement that no such two paths can intersect. We shall demonstrate that such a construction exists.

Due to the path connectedness of $B$, we can pick any finite set of distinct points in $B$ and connect them by a path. Now we demonstrate that since $B$ is an open set, a path connecting a finite set of distinct points can be chosen to be a simple one. Let us select any path that goes thought these points and denote it by ${\bf q} : [0,1] \to B$. Assume that the path self-intersects in the point ${\bf x}$. Since $B$ is open and Hausdorff, we can select an open ball centred at the point of self-intersection, $O_{\varepsilon}({\bf x}) = \{ {\bf y} \in \mathbb{R}^D \, | \, \| {\bf y} - {\bf x} \| < \varepsilon\}$, such that $O_{\varepsilon}({\bf x}) \subset B$ and the radius of the ball is chosen such that the following condition satisfies
$$
\left\{ {\bf q}(s) \, | \, 0 \leqslant s \leqslant 1 \right\} \bigcap \partial O_{\varepsilon}({\bf x}) = \left\{ {\bf q}(s_1), {\bf q}(s_2), {\bf q}(s_3), {\bf q}(s_4) \right\},
$$ 
where $s_1 < s_2 < s_3 < s_4$ and ${\bf q}(s_1) \neq {\bf q}(s_2) \neq {\bf q}(s_3) \neq {\bf q}(s_4)$. In other words, the open ball contains two segments $G_1 = \{ {\bf q}(s) \, | \, s_1 < s < s_2 \} \subset O_{\varepsilon}({\bf x})$ and $G_2 = \{ {\bf q}(s) \, | \, s_3 < s < s_4 \} \subset O_{\varepsilon}({\bf x})$ which intersect $G_1 \cap G_2 = \{ {\bf x} \}$. We shall use the convexity of $O_{\varepsilon}({\bf x})$ and replace $G_{1,2}$ by linear segments that do not intersect. Let $L({\bf a}, {\bf b}) = \{ {\bf a} + t({\bf b} - {\bf a}) \, | \, 0\leqslant t \leqslant 1\}$ denote a linear segment connecting points ${\bf a}$ and ${\bf b}$. The following statement is true
\begin{eqnarray}
\forall {\bf b}_1, {\bf b}_2, {\bf e}_1, {\bf e}_2 \in \partial O_{\varepsilon}({\bf x}) \: ({\bf b}_1 \neq {\bf b}_2 \neq {\bf e}_1 \neq {\bf e}_2) \: \exists {\bf c} \in O_{\varepsilon}({\bf x})  \nonumber\\
\quad \left[ L({\bf b}_1, {\bf x}) \cup L({\bf x}, {\bf e}_1) \right]\bigcap 
\left[ L({\bf b}_2, {\bf c}) \cup L ({\bf c}, {\bf e}_2) \right] = \emptyset, \nonumber
\end{eqnarray}
because its negation contradicts the fact that $O_{\varepsilon}({\bf x})$ is an open ball (note that the assumption $D\geqslant 3$ is crucial). Substituting $G_{1,2}$ by $L[{\bf q}(s_1), {\bf x}] \cup L[{\bf x}, {\bf q}(s_2)]$ and $L[{\bf q}(s_3), {\bf c}]\cup L[{\bf c}, {\bf q}(c_4)]$, correspondingly, we have demonstrated that there exists a simple path that connects a finite number of distinct points.

Now we shall demonstrate that $B' = B\setminus p({\bf a}, {\bf b})$ is an open path connected set, where $p({\bf a}, {\bf b}) = \{ {\bf q}(s) \, | \, 0 \leqslant s \leqslant 1\} \subset B$ denotes a simple path connecting ${\bf a}, {\bf b} \in B$. Construct the set 
$W_{\varepsilon}  = \{ {\bf y} \in \mathbb{R}^D \, | \, \| {\bf y} - {\bf q}(s) \| < \varepsilon, \, 0 \leqslant s \leqslant 1 \}$. Since $B$ is open, we fix $\varepsilon >0$ such that $\partial W_{\varepsilon} \subset B$. Let us choose arbitrary points ${\bf c}, {\bf d} \in B'$. There are two possibilities -- either $p({\bf c}, {\bf d}) \subset B'$ or $p({\bf c}, {\bf d}) \not\subset  B'$. The latter case means that $p({\bf c}, {\bf d}) \subset B$ and $p({\bf c}, {\bf d}) \cap \partial W_{\varepsilon'} \neq \emptyset$ for some $\varepsilon'$ ($0<\varepsilon' \leqslant \varepsilon$). However, all points in the set $p({\bf c}, {\bf d}) \cap \partial W_{\varepsilon'}$ can be path connected with all the paths lying in $\partial W_{\varepsilon'}$ because $\partial W_{\varepsilon'}$ is a path connected set. Finally, since $ \partial W_{\varepsilon'} \subset B'$, we have demonstrated that $B'$ is an open path connected set. By the same token, $\forall {\bf f}, {\bf g} \in B'$, $B'' = B' \setminus p({\bf f}, {\bf g})$ (where $p({\bf f}, {\bf g})\subset B'$) is an open path connected set as well. In other words, we can subtract any finite number of simple non-intersecting paths from $B$ without affecting its path connectedness. Therefore, ${\bf q}^{(1)}(s), \ldots, {\bf q}^{(N)}(s)$ exists with the prescribed properties.

The paths ${\bf q}^{(l)}(s)$ are continuous curves by definition. Thus, due to the Weierstrass theorem, they can be approximated polynomially within an arbitrary given accuracy [and preserving property (\ref{ConditionOnIntersectionBk2})]. Let ${\bf q}^{(1)}(s), \ldots, {\bf q}^{(N)}(s)$ denote these polynomial curves. Having alter the $l^{\rm th}$ curve in an arbitrary small neighbourhood of the point ${\bf q}^{(l)}(s_j)$, $\forall j=1,\ldots,K$ and $\forall l=1,\ldots,N$, such that  
\begin{eqnarray}
&& \left|  \left\{ \frac{ d{\bf q}^{(l)}(s_j)/ds }{ \left\| d{\bf q}^{(l)}(s_j)/ds \right\| }  \right\}_{l=1,\ldots,N} \label{GettingRightTangetVector2}
 \bigcap \Delta_k \right| = \widetilde{N}_k(t_j), \\
&& \Delta_k = \left\{ {\bf x}/\| {\bf x} \| \: \left| \: {\bf x}\neq {\bf 0} \mbox{ and }  {\bf x} \in \widetilde{B}_k \setminus\partial \widetilde{B}_k \right\} \right. 
\end{eqnarray}
(viz., the direction of the tangent vector of ${\bf q}^{(l)}(s)$ at $s=s_j$ coincides with the direction of one of the momentum vectors from the set $\widetilde{B}_k$), we can use these new curves as the scratches, ${\bf q}^{(1)}(s), \ldots, {\bf q}^{(N)}(s)$, to enforce desired dynamics of the classical ensemble. 

Now let us slightly generalize the procedure of scratching the potential $U({\bf q})$ by performing the following substitution 
\begin{eqnarray}\label{ScratchedModifiedPotential}
U^{(\lambda)} ({\bf q}) = U({\bf q}) \prod_{l=1}^N \left[ 1 - e^{-\lambda f_l({\bf q})} \right] + \sum_{l=1}^N V_l e^{-\lambda f_l ({\bf q})}.
\end{eqnarray}
The fundamental property of the scratches, derived originally for the case of Eq. (\ref{ScratchedPotential2}), is valid in the case of the modified scratching [Eq. (\ref{ScratchedModifiedPotential})] if the potential $V_l$ is a function of the curvilinear coordinate that is in the direction of the $l^{\rm th}$ scratch (see the theorem on page 76 of Ref. \cite{Arnold1989}). Here, the requirement that the scratches must be simple curves that do not intersect leads to an important conclusion that all the functions $V_l$ can be defined $\forall {\bf q} \in B$ (evidently that the function $V_l$ cannot be defined at points where two scratches intersect and at points were a scratch self-intersects).

Now we employ the formalism of the Lagrange equations of the second kind (see, e.g., Ref. \cite{Papastavridis2002}). Assuming that the parametric definition of the $l^{\rm th}$ scratch reads ${\bf q} = {\bf q}^{(l)}\left(s^{(l)} \right)$, it is natural to use $s^{(l)}$ as the generalized coordinate by interpreting this parametric form of the scratch as a holonomic constrain. The Lagrange equations of the second kind for the generalized coordinate $s^{(l)}$ is as follows
\begin{eqnarray}
m \frac{d}{dt}\left[ \left( \frac{d{\bf q}^{(l)}}{ds^{(l)}} \right)^2 \frac{ds^{(l)}}{dt} \right] = -\frac{dV_l}{ds^{(l)}}, \quad l=1,\ldots,N.
\end{eqnarray}

The final step of the proof is to construct the potentials $V_l$ such that the interpretations of $\pi_k^{(\infty)}(t_j)$ and $\widetilde{\pi}_k^{(\infty)}(t_j)$ given by Eq. (\ref{ClassicalProbabilityRepresentation2}) are correct. If we require that 
\begin{eqnarray}
s^{(l)}(t_j) = s_j, \label{ConditionSAtTj2}
\end{eqnarray}
then according to Eq. (\ref{ConditionOnIntersectionBk2}), dynamics of the classical ensemble indeed realizes the interpretation of $\pi_k^{(\infty)}(t_j)$.
Recall that the momentum of the $l^{\rm th}$ classical particle is 
$$
{\bf p}^{(l)}(t_j) = m \frac{ d{\bf q}^{(l)} }{dt} (t_j)= m \frac{ d{\bf q}^{(l)} }{ds^{(l)}} \frac{ds^{(l)}}{dt} (t_j).
$$
Since we have already chosen suitable directions of the momenta in Eq. (\ref{GettingRightTangetVector2}), we are only left to find proper magnitudes of the momenta. $\forall {\bf x} \in\Delta_k$, $\exists c \in\mathbb{R}$ $c{\bf x} \in \widetilde{B}_k \setminus\partial \widetilde{B}_k$; whence, there exit $c_{l,j}  \in\mathbb{R}$ such that 
\begin{eqnarray}
\left| \left\{ m c_{l,j} \frac{d{\bf q}^{(l)}(s_j)}{ds^{(l)}} \right\}_{l=1,\ldots,N} \bigcap \left( \widetilde{B}_k \setminus\partial \widetilde{B}_k \right) \right| = \widetilde{N}_k(t_j), \label{RightMomentumMagnitudeCondition2}\\
 j=1,\ldots,K, \qquad k=1,\ldots,n. \nonumber
\end{eqnarray}
If $d s^{(l)}(t_j)/dt = c_{l,j}$, then the interpretation of $\widetilde{\pi}_k^{(\infty)}(t_j)$ is realized.

Therefore, in order to finalize the proof, we need to find the potentials $V_l$ by knowing that $s^{(l)}(t_j) = s_j$ and $d s^{(l)}(t_j)/dt = c_{l,j}$. These type of problems are known as inverse problems for  ordinary differential equations, and as demonstrated in Ref. \cite{Kunze1999} (see theorem 4 of Ref. \cite{Kunze1999}), the potentials $V_l$ can be found such that $s^{(l)}(t_j) \approx s_j$ and $d s^{(l)}(t_j)/dt \approx c_{l,j}$ within an arbitrary given accuracy.

Conditions (\ref{ClassicalConfimentCondition2}) follow from the construction of the trajectories of the classical particles. Equations (\ref{LimPropertySchratchedPotential2}) and (\ref{LimPropertySchratchedWaveFunc2}) are proven in the same manner as Eqs. (\ref{LimPropertySchratchedPotential}) and (\ref{LimPropertySchratchedWaveFunc}) from Theorem \ref{theorem1}.
\end{proof}

\section{Physical Interpretations of the Result}\label{Sec3}

Main physical consequences of Theorem \ref{theorem2} are the following:

\subsection{Is the system in the black box  quantum or classical?}\label{Sec3s1}

Imagine that some experimentalist is given a black box, and he or she needs to determined whether a quantum system or an ensemble of classical particles is inside the box. The Hamiltonian (i.e., the Hamiltonian operator in the quantum case and the corresponding Hamiltonian function in the classical  case) of the system inside is assumed to be known a priori. The experimentalist is only allowed to measure the coordinate and/or momentum distributions at arbitrary time moments. Then, Theorem \ref{theorem2} negatively answers this question, i.e., having only coordinate and/or momentum distributions for some Hamiltonian system, one cannot conclude whether they are obtained as the squared modulus of a wave function or as distribution functions of an ensemble of classical particles. 

Let us explain how such a conclusion is reached from Theorem \ref{theorem2}. The set $B$ in the theorem in fact represents the given black box; thus, we assume that the topology of the black box obeys assumption 1 of Theorem \ref{theorem2} (indeed, this assumption is valid for a majority of realistic experimental setups). It is crucial that no-matter how precise experimental equipment is, there is aways a finite accuracy in each and every experimental measurement. Additionally, no probability distribution of a continuous variable (such as coordinates and momenta) can be  measured, strictly speaking. What one measures in such a situation is the probability that a value of the variable lies in some small (but finite) region. In terms of Theorem \ref{theorem2}, this means that neither $\left| \Psi({\bf q}, t) \right|^2$ nor $\left| \Phi({\bf p}, t) \right|^2$ is measurable; however, both $P_k (t)$ [Eq. (\ref{Deff_Pk_2})] and $\widetilde{P}_k (t)$ [Eq. (\ref{Deff_tilde_Pk_2})] are measurable in a real-life experiment. At this moment, it is convenient to informally reformulate Theorem \ref{theorem2} from the point of view of the experimentalist: Say that he or she first assumes that the system in the black box is quantum, then the collected data ought to be interpreted as $P_k (t)$ and  $\widetilde{P}_k (t)$, correspondingly. (Note that any real device is capable of taking only a finite number of ``shots,'' so $t$ is a discrete parameter.) However, the theorem says that the original potential energy can be slightly modified such that no real (i.e., of a finite precision) experiment can detect a difference [Eqs. (\ref{LimPropertySchratchedPotential2}) and (\ref{LimPropertySchratchedWaveFunc2})] between the original and altered potential. Moreover, one can construct an ensemble of classical particles (with the Hamiltonian function of the natural form where the altered potential energy used as a potential term [Eq. (\ref{ClassicalSystem2})]) which reproduces the measured data $P_k (t)$ and  $\widetilde{P}_k (t)$ within an arbitrary desired precision [Eqs. (\ref{ApproximationBoundProbabilities2}) and (\ref{ApproximationBoundProbabilitiesMomenta2})]. Thus, the negative answer.

The question whether a given dynamical process being quantum or classical is of active interest in the field of quantum control (see, e.g., Ref. \cite{Franco2010}). Reiterating our results, we conclude that almost any given quantum evolution can be reproduced by means of a classical ensemble with a scratched potential; even the interference in the double-slit experiment can be obtained in this way.

\subsection{Theorem \ref{theorem2} as a new interpretation of quantum mechanics}

De Broglie-Bohmian mechanics gives a consistent way of constructing a classical ensemble which reproduces exactly quantum results. This is achieved by adding an extra term, which depends solely on the wave function, into the potential energy of the system. Theorem \ref{theorem2} gives an alternative way of construction a classical ensemble that reproduces quantum results, however, not exactly but within an arbitrary given accuracy. Contrary to the scratching procedure introduced in the current paper, the de Broglie--Bohmian  modifications of the potential are quite ``noticeable,'' i.e., they are done not on a measure zero set; furthermore, the obtained classical ensemble is unconstrained. As it was discussed in the proof of Theorem \ref{theorem2} (see the fundamental property of scratches), the scratches are in fact holonomic constrains. However, they are not usual constrains (we used the name ``implicit holonomic constrains'')  since a classical particle must have  peculiar initial conditions in order to be able to experience their presence. 

In summary, an ensemble of classical particles can simulate any quantum system if the potential energy of the system is properly adjusted (according to Bohmian mechanics), or if (implicit holonomic) constrains are assigned to the system (according to Theorem \ref{theorem2}). Even though the scratching procedure is dual to de Broglie--Bohmian mechanics, they both share a common feature -- they are nonlocal hidden variable theories. The hidden variables  in our approach are the scratches that cannot be experimentally detected due to their ``measure zero set'' nature. 

\subsection{Quantization and quasi classical approximation}

Before presenting the connection between Theorem \ref{theorem2} and the procedures of quantization and quasi classical approximation, we recall the definition of the Hilbert space $\mathrsfs{L}^2 (B)$. The axiom of quantum mechanics postulates that elements in $\mathrsfs{L}^2 (B)$ represent quantum states. First, we construct the vector space 
$
L^2 (B) = \left\{ f : B \to \mathbb{C} \: \left| \: f \mbox{ is measurable and $|f|^2$  integrable} \right\} \right. ,
$
where measuring and integration is done with respect to the Lebesgue measure. The sesquilinear form 
$
\langle f, g \rangle = \int_B f({\bf q}) \overline{g({\bf q})} \Lambda( d{\bf q} )
$ 
is not an inner product on $L^2 (B)$  because the equality $\langle f, f \rangle = 0$ implies that $f$ vanishes almost everywhere (a. e.), but does not necessary mean that $f\equiv 0$. Having introduced the space,
$
\mathrsfs{N} = \left\{ f\in L^2 (B) \: | \: f = 0 \mbox{ a. e. on $B$} \right\}
$,
we define $\mathrsfs{L}^2 (B) =  L^2 (B) / \mathrsfs{N}$ as a factor space. One can demonstrate that 
$
\left( [f], [g] \right) = \int_B f({\bf q}) \overline{g({\bf q})} \Lambda( d{\bf q} ),
$
$f \in [f]$ and $g\in [g]$, is indeed an inner product in $\mathrsfs{L}^2 (B)$. Here, $[f]$ and $[g]$ are equivalence classes of functions in $L^2 (B)$.

In the light of the presented above definition, Eq. (\ref{LimPropertySchratchedWaveFunc2}) implies that $\Psi^{(\infty)}$ and $\Psi$ correspond to the same physical state. By the same token, Eq. (\ref{LimPropertySchratchedPotential2}) means that, as far as quantum mechanics is concerned, the potentials $U^{(\infty)}$ and $U$ correspond to the same quantum system; however, they correspond to two qualitatively different classical systems.

Hence, generally speaking, quantization maps a classical system to an equivalence class of quantum systems modulo scratching. Since the procedure of quantization and quasi classical approximation are closely related \cite{Shirokov1976b, Shirokov1979d}, the ``dual'' statement is also true: quasi classical approximation maps a quantum system to an equivalence class of a classical systems modulo scratching. 

We reiterate that scratching may tremendously change the behaviour of a classical system. For example, consider a system that contains a potential barrier such that a classical particle with some fixed energy cannot go over the barrier. The quantum counterpart of the classical particle of course can ``go through'' the barrier. To emphasize the conceptual difference in the behaviours of the classical and quantum systems, we use the term ``quantum tunnelling.'' Nonetheless, one can readily scratch the barrier such that the classical particle can also ``go through'' the barrier, but at the same time, this alteration of the potential has no effect on the behaviour of the quantum counterpart.

Physically speaking, the fact that scratching does not affect the behaviour of a quantum system is due to the dynamical non-local character of quantum mechanics. Correspondingly, the dynamical local character of classical mechanics is responsible for the effect that scratching has on a classical system.

\section{Concluding remarks} 

In Theorem \ref{theorem2}, we have demonstrated that a classical ensemble can be constructed that reproduces within an arbitrary accuracy the coordinate and momentum probability distributions of a given quantum system, and the potential energy terms in the Hamiltonians of the classical and quantum systems almost coincide. The consequences of the existence of such a construction have been discussed. It is important that the trajectories of the classical ensemble may differ from the trajectories obtained from de Broglie--Bohmian mechanics. 

Theorem \ref{theorem2} complements results of recent paper \cite{Coffey2010}. In fact, Theorem \ref{theorem2} can be proven constructively employing the geometrical construction of centroidal Voronoi tessellations in the same manner as it was utilized in Ref.  \cite{Coffey2010}. In such a case, the regions $B_j$ from assumption 2 of Theorem \ref{theorem2} should be substituted by the Voronoi cells, $C_j$, see Eq. (2) of Ref. \cite{Coffey2010}; the points ${\bf q}^{(l)}(t_j)$ should be substituted by the centroids of the corresponding Voronoi cells, ${\bf x}_l = {\bf x}_l(t_j)$, see Eq. (3) of Ref. \cite{Coffey2010}. Then, the trajectories constructed by means of the centroidal Voronoi tessellations can be used as scratches. However, note that the current proof of Theorem \ref{theorem2} is far more general than the summarized approach because it solely relies on very basic topological assumptions.

We want to emphasize the importance of the topology of $B$ in Theorem \ref{theorem2}, i.e., the region of the configuration space where the quantum system is confined. The region $B$ represents the geometry of a measuring apparatus (viz., the black box in the terminology of Sec. \ref{Sec3s1}). Had the region $B$ been a disconnected topological space, the idea of scratching, generally speaking, could not have been utilized with the requirement that the classical ensemble should also be confined in $B$.\footnote{
There is, nevertheless, an exception. If $B$ is disconnected, i.e., $B =  B(1) \cup B(2)$ and $B(1) \cap B(2) = \emptyset$, then one might expect that Theorem \ref{theorem2} should be valid without any changes if $\int_{B(1)} \left| \Psi ({\bf q}, t)\right|^2 d^D {\bf q} = C$ and $\int_{B(2)} \left| \Psi ({\bf q}, t)\right|^2 d^D {\bf q} = 1-C$, $\forall t \in [t_i, t_f]$. In this case, we can apply the scratching procedure in each subregion $B(1)$ and $B(2)$ separately since there is no ``exchange'' of the probability between them.
}

The property that the classical ensemble reproduces the momentum distribution is not a primary one since the measurement of any observable can be reduced to measuring coordinates. As a matter of fact, this property can be completely removed from the formulation of Theorem \ref{theorem2} without affecting the rest of the statement; in this case, steps (\ref{GettingRightTangetVector2}) and (\ref{RightMomentumMagnitudeCondition2}) need to be removed from the poof, and the upper bound on $Q$ can be relaxed to $Q > n^{Kn}$. 

Theorem \ref{theorem2} allows for many further generalizations. For example, it would be also interesting to apply Valentini's analysis \cite{Valentini1991a, Valentini1991} to Theorem \ref{theorem2} in future papers.

\acknowledgments
	
The author thanks Robert W. Spekkens and Robert R. Lompay for many vital comments. The Ontario Graduate Scholarship is acknowledged for financial support. 

\appendix 	

\section{A result regarding the Diophantine approximation of a sequence of real numbers with linear constrains}

\begin{lemma}
$\forall \alpha_1^{(1)}, \ldots, \alpha_n^{(1)}, \ldots, \alpha_1^{(K)}, \ldots, \alpha_n^{(K)} \in \mathbb{R}$, such that 
\begin{eqnarray}\nonumber
\sum_{j=1}^n \alpha_j^{(r)} = A^{(r)}/B^{(r)}, \quad A^{(r)}, B^{(r)} \in \mathbb{Z}, \quad r=1,\ldots,K, 
\end{eqnarray}
$\forall Q \in \mathbb{N}$, $Q > \left(n \max_r\left\{ |B^{(r)}|\right\} \right)^{nK}$, $\exists q, a_1^{(1)}, \ldots, a_n^{(1)}, \ldots, a_1^{(K)}, \ldots, a_n^{(K)} \in  \mathbb{Z}$, such that
\begin{eqnarray}
\max_{j=1,\ldots,n \atop r=1,\ldots,K}\left\{ \left| \alpha_j^{(r)} - a_j^{(r)} /q  \right| \right\} < \frac{1}{q Q^{\frac 1 {nK}}}, \quad 0< q \leqslant Q, \label{BoundsOnDiophantineApprox}\\
\sum_{j=1}^n a_j^{(r)}/q = A^{(r)}/B^{(r)}, \quad r=1,\ldots,K. \label{LinearConstrains}
\end{eqnarray}
\end{lemma}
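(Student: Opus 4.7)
The statement is a simultaneous Diophantine approximation result subject to $K$ rational linear constraints $\sum_{j=1}^{n}\alpha_j^{(r)}=A^{(r)}/B^{(r)}$, so my plan is to derive it from a single application of Minkowski's convex body theorem on the sublattice of $\mathbb{Z}^{nK+1}$ carved out by those constraints, with the denominator $q$ and the numerators $a_j^{(r)}$ packaged as integer coordinates. Writing $y_0$ for $q$ and $y_j^{(r)}$ for $a_j^{(r)}$, I let $V\subset\mathbb{R}^{nK+1}$ be the $((n-1)K+1)$-dimensional subspace $\{B^{(r)}\sum_{j} y_j^{(r)}=A^{(r)}y_0,\ r=1,\ldots,K\}$ and set $\Lambda=\mathbb{Z}^{nK+1}\cap V$. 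After reducing each $A^{(r)}/B^{(r)}$ to lowest terms (which only shrinks $|B^{(r)}|$ and so only makes the hypothesis easier), any integer point of $V$ must satisfy $B^{(r)}\mid y_0$ for every $r$, so $M:=\mathrm{lcm}(|B^{(1)}|,\ldots,|B^{(K)}|)$ divides $y_0$. In the chart $(y_0,y_j^{(r)})_{j<n,\,r\leq K}$ of $V$ obtained by solving for $y_n^{(r)}$, the lattice $\Lambda$ is identified with $M\mathbb{Z}\times\mathbb{Z}^{(n-1)K}$, with chart covolume $M$.

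Next I form the centrally symmetric convex body
\[
C=\bigl\{(y_0,y_j^{(r)})\in\mathbb{R}^{nK+1}:\,|y_0|\leq Q,\ |y_j^{(r)}-y_0\alpha_j^{(r)}|\leq\epsilon\ \forall j,r\bigr\},\qquad \epsilon:=Q^{-1/(nK)},
\]
and compute $\mathrm{vol}(C\cap V)$ in the chart. Using $\sum_j\alpha_j^{(r)}=A^{(r)}/B^{(r)}$ together with the defining equation of $V$, the $j=n$ inequality $|y_n^{(r)}-y_0\alpha_n^{(r)}|\leq\epsilon$ becomes the cross-inequality $\bigl|\sum_{j<n}(y_j^{(r)}-y_0\alpha_j^{(r)})\bigr|\leq\epsilon$. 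With the substitution $u_j^{(r)}=y_j^{(r)}-y_0\alpha_j^{(r)}$ in each group, the chart volume factorizes as $2Q\cdot(V_{n-1}\epsilon^{n-1})^K$, where $V_{n-1}:=\mathrm{vol}\{v\in[-1,1]^{n-1}:|\sum_{j}v_j|\leq 1\}$.

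Then I apply Minkowski's theorem: the sufficient condition $\mathrm{vol}(C\cap V)\geq 2^{(n-1)K+1}M$ reduces, after substituting $\epsilon=Q^{-1/(nK)}$, to $Q^{1/n}V_{n-1}^{K}\geq 2^{(n-1)K}M$. Using the elementary lower bound $V_{n-1}\geq 2^{n-1}/n$ (so that $2^{n-1}/V_{n-1}\leq n$) together with $M\leq(\max_r|B^{(r)}|)^K$, this reduces to $Q\geq n^{nK}(\max_r|B^{(r)}|)^{nK}$, which is precisely the hypothesis $Q>(n\max_r|B^{(r)}|)^{nK}$. Minkowski therefore yields a nonzero $(y_0,y_j^{(r)})\in C\cap\Lambda$; the point must have $y_0\neq 0$, because $y_0=0$ combined with $|y_j^{(r)}|\leq\epsilon<1$ would force every coordinate to vanish. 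After negating the point if necessary (permissible since $C$ is centrally symmetric and the constraints are homogeneous), setting $q=y_0\in[1,Q]$ and $a_j^{(r)}=y_j^{(r)}$ yields simultaneously the approximation \eqref{BoundsOnDiophantineApprox} and the linear constraint \eqref{LinearConstrains}.

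The main technical obstacle I anticipate is the volume estimate $V_{n-1}\geq 2^{n-1}/n$ uniformly in $n\geq 2$: small cases follow from direct integration (e.g.\ $V_1=2$, $V_2=3$, $V_3=16/3$), while the asymptotic regime requires the local central limit theorem estimate $V_{n-1}/2^{n-1}\sim\sqrt{6/(\pi(n-1))}$ for the normalized sum of i.i.d.\ uniforms on $[-1,1]$; turning this into a clean non-asymptotic inequality valid for every $n\geq 2$ is the one step that needs a separate short lemma, but this is the place where the factor $n$ (rather than $n-1$) in the hypothesis $Q>(n\max_r|B^{(r)}|)^{nK}$ is consumed.
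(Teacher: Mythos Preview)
Your argument is correct, but it takes a substantially heavier route than the paper's. The paper's proof is essentially two lines: apply the unconstrained Dirichlet simultaneous approximation theorem to all $nK$ reals $\alpha_j^{(r)}$ to obtain $q\in(0,Q]$ and integers $a_j^{(r)}$ with $|q\alpha_j^{(r)}-a_j^{(r)}|<Q^{-1/(nK)}$, and then observe that
\[
\Bigl|A^{(r)}q-B^{(r)}\sum_{j=1}^{n}a_j^{(r)}\Bigr|
=|B^{(r)}|\,\Bigl|\sum_{j=1}^{n}\bigl(q\alpha_j^{(r)}-a_j^{(r)}\bigr)\Bigr|
<n\,|B^{(r)}|\,Q^{-1/(nK)}<1
\]
under the hypothesis $Q>(n\max_r|B^{(r)}|)^{nK}$. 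Since the left-hand side is an integer, it must vanish, which is exactly \eqref{LinearConstrains}. The linear constraints are never imposed; they come for free because the approximation is already good enough to pin the relevant integer to zero.

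You instead enforce the constraints at the outset by passing to the sublattice $\Lambda=\mathbb{Z}^{nK+1}\cap V$ and running Minkowski's theorem on the constrained subspace. This is sound and more structural, and it would generalize to situations where the constraint is \emph{not} an automatic consequence of the approximation quality (e.g.\ relations with denominators too large relative to the target bound). The cost is the extra machinery (covolume and cross-section volume computations) and the residual inequality $V_{n-1}\ge 2^{n-1}/n$ that you flag as the main obstacle. Incidentally, that inequality has a short elementary proof you may have missed: the density of $S_{n-1}=\sum_{i=1}^{n-1}X_i$ with $X_i$ i.i.d.\ uniform on $[-1,1]$ is symmetric and unimodal, so among the length-$2$ intervals $[2k-1,2k+1]$ the central one $[-1,1]$ carries the most mass; since at most $n$ of these intervals meet the support $[-(n-1),n-1]$ in a set of positive measure, $P(|S_{n-1}|\le 1)\ge 1/n$ follows directly, with no CLT asymptotics needed.
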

\begin{proof}
According to the Dirichlet approximation theorem \cite{Bernik2002} (see also Ref. \cite{Schmidt1980}), $\forall \alpha_1^{(1)}, \ldots, \alpha_n^{(K)} \in \mathbb{R}$, $\forall Q \in \mathbb{N}$, $Q>1$, $\exists q, a_1^{(1)}, \ldots, a_n^{(K)} \in \mathbb{Z}$, such that
\begin{eqnarray}\nonumber
\max_{j=1,\ldots,n \atop r=1,\ldots,K}\left\{ \left| q\alpha_j^{(r)} - a_j^{(r)}   \right| \right\} < Q^{-\frac 1 {nK}}, \quad 0< q \leqslant Q.
\end{eqnarray}
Therefore, Eq. (\ref{BoundsOnDiophantineApprox}) is proven. To derive Eq. (\ref{LinearConstrains}), consider 
\begin{eqnarray}
&&\left| A^{(r)} q - B^{(r)} \sum_{j=1}^n a_j^{(r)} \right| =
\left|B^{(r)}\right| \left| A^{(r)} q/B^{(r)} -  \sum_{j=1}^n a_j^{(r)} \right| \nonumber\\
&&= \left|B^{(r)}\right| \left| \sum_{j=1}^n \left(q\alpha^{(r)}_j - a^{(r)}_j \right) \right| \leqslant \left|B^{(r)}\right| \sum_{j=1}^n \left| q\alpha^{(r)}_j - a^{(r)}_j \right| \nonumber\\
&&< n  \left|B^{(r)}\right| Q^{-\frac{1}{nK}} \leqslant n\max_r\left\{ |B^{(r)}|\right\} Q^{-\frac{1}{nK}}.
\end{eqnarray}
Assuming that $Q > \left(n \max_r\left\{ |B^{(r)}|\right\} \right)^{nK}$, we obtain 
\begin{eqnarray}\label{AuxiliaryInequality} 
\left| A^{(r)} q - B^{(r)} \sum_{j=1}^n a_j^{(r)} \right| < 1.
\end{eqnarray}
However, since $A^{(r)} q - B^{(r)} \sum_{j=1}^n a_j^{(r)}$ is integer, we conclude that inequality (\ref{AuxiliaryInequality}) is satisfied if and only if Eq. (\ref{LinearConstrains}) takes place. 
\end{proof}

\bibliography{Quantum_vs_Classical}
\end{document}